\def\complexNumbers{\mathbb{C}}
\def\constante{{\rm e}}
\def\constantj{{\rm j}}
\def\positiveInt{m}
\def\psksymbol[#1]{d_{#1}}
\def\psksize{H}
\def\coefficienta[#1]{{{d}}_{#1}}
\def\chirpm{m}
\def\chirpn{n}
\def\chirpphase[#1][#2]{{\psi_{#1}{(#2)}}}
\def\seqx[#1]{\textit{{x}}(#1)}
\def\seqy[#1]{\textit{{y}}(#1)}
\def\symbolPSK[#1]{s_{#1}}
\def\chirpndetect{\hat{n}}
\def\chirpmdetect{\hat{m}}
\def\symbolPSKdetect[#1]{\hat{s}_{#1}}
\def\dataSymbolAfterIDFTspread[#1]{\tilde{d}_{#1}}
\def\test[#1]{C_{#1}}
\def\numberofIndices{l}
\def\spectralEfficiency{\rho}
\def\EbNO{E_{\text{b}}/N_{\text{0}}}
\def\transmittedSignal[#1]{p\left(#1\right)}
\def\numberOfOccupiedSubcarriers{D}
\def\numberOfShifts{M}
\def\indexSubcarrier{k}
\def\indexpsksymbol{k}
\def\indexTime{m}
\def\basisFunction[#1]{B_{#1}(\timeVar)}
\def\amountOfShift[#1]{\tau_{#1}}
\def\dataSymbols[#1]{d_{#1}}
\def\angleSignal[#1]{\psi_{#1}(\timeVar)}
\def\instantaneousFrequency[#1]{F_{#1}(t)}
\def\besselFunctionFirstKind[#1][#2]{J_{#1}\left(#2\right)}
\def\lowerFrequency{L_{\rm d}}
\def\upperFrequency{L_{\rm u}}
\def\idftSize{N}
\def\indexSample{n}
\def\fourierSeries[#1]{c_{#1}}
\def\symbolDuration{T_{\rm s}}
\def\timeVar{t}
\def\fresnelC[#1]{C(#1)}
\def\fresnelS[#1]{S(#1)}
\def\linearXone{\alpha_\indexSubcarrier}
\def\linearXtwo{\beta_\indexSubcarrier}
\def\linearCoef{\gamma_\indexSubcarrier}
\def\fcarrier{f_{\rm c}}
\def\complexNumbers{\mathbb{C}}
\def\integersPositiveSet{\mathbb{Z}^{+}}
\def\constante{{\rm e}}
\def\constantj{{\rm j}}
\def\psksize{H}
\def\lagForCorrelation{k}
\def\indexEleOfSeq{i}
\def\indexIterationANF{l}
\def\orderMonomial[#1]{k_{#1}}
\def\coeffientsANF[#1]{c_{#1}}
\def\polyVariable{z}
\def\numberOfPointsForPSK{H}
\def\modulationSymbolF[#1]{m_{#1}}
\def\cardinalitySetOfOperators[#1]{{H}_{#1}}
\def\monomial[#1]{x_{#1}}
\def\lengthGaGb{M}
\def\scaleAexp[#1]{a_{#1}}
\def\scaleBexp[#1]{b_{#1}}
\def\scaleEexp[#1]{e_{#1}}
\def\angleexp[#1]{c_{#1}}
\def\angleexpAll[#1]{k_{#1}}
\def\angleScaleAexp[#1]{\dot{c}_{#1}}
\def\angleScaleBexp[#1]{\ddot{c}_{#1}}
\def\separationGolay[#1]{d_{#1}}
\def\scaleEexpBatch[#1][#2]{e_{#1}^{(#2)}}
\def\angleexpAllBatch[#1][#2]{k_{#1}^{(#2)}}
\def\scaleEexpPre[#1]{\dot{e}_{#1}}
\def\angleexpAllPre[#1]{\dot{k}_{#1}}
\def\eleGa[#1]{{a}_{#1}}
\def\eleGb[#1]{{b}_{#1}}
\def\OCB{{M}_{\rm ocb}}
\def\apac[#1][#2]{\rho_{#1}(#2)}
\def\apacPositive[#1][#2]{\rho^{+}_{#1}(#2)}
\def\binaryAsignment[#1][#2]{b_{#1}^{(#2)}}
\def\eleSeqf[#1]{{f}_{#1}}
\def\eleSeqg[#1]{{g}_{#1}}
\def\eleSeqcf[#1]{{c}_{f,#1}}
\def\eleSeqcg[#1]{{c}_{g,#1}}
\def\scaleA[#1]{\alpha_{#1}}
\def\scaleB[#1]{\beta_{#1}}
\def\angleGolay[#1]{\omega_{#1}}
\def\angleScaleA[#1][#2]{\dot{\omega}_{#1}^{#2}}
\def\angleScaleB[#1][#2]{\ddot{\omega}_{#1}^{#2}}
\def\permutationShift[#1]{{\psi_{#1}}}
\def\permutationMono[#1]{{p_{#1}}}
\def\CPDuration{T_{\rm cp}}
\def\timeVar{t}
\def\funczArbitrary[#1]{z(#1)}
\def\funcaArbitrary[#1]{a(#1)}
\def\funcbArbitrary[#1]{b(#1)}
\def\coefficientArbitrary[#1]{k_{#1}}
\def\distanceToPoint[#1]{d_{#1}}
\def\ratioBetweenDistanceAndInner[#1]{l_{#1}}
\def\angleBetweenPointAndXaxis[#1]{\theta_{#1}}
\def\angleBetweenPointAndXYDiagonal[#1]{\psi_{#1}}
\def\angleBetweenPointAndSymPoint[#1]{\xi_{#1}}
\def\setPSKsymbols[#1]{\mathbb{S}_{{\rm PSK},#1}}
\def\setChirps{\mathbb{S}_{\rm chirp}}
\def\numberOfInfoBits{S}
\def\averagePower{P_{\rm av}}
\def\vecArrangement[#1]{\textbf{b}_{#1}}
\def\seqGa{\textit{\textbf{a}}}
\def\seqGb{\textit{\textbf{b}}}
\def\seqGaIt[#1]{\textit{\textbf{a}}^{(#1)}}
\def\seqGbIt[#1]{\textit{\textbf{b}}^{(#1)}}
\def\seqGf[#1]{\textit{\textbf{f}}_{#1}}
\def\seqGg[#1]{\textit{\textbf{g}}_{#1}}
\def\seqGfdot[#1]{\bar{\textit{\textbf{f}}}_{#1}}
\def\seqGgdot[#1]{\bar{\textit{\textbf{g}}}_{#1}}
\def\seqSub[#1]{\textit{\textbf{h}}_{#1}}
\def\seqFirstOrderMonomial[#1]{\textit{\textbf{m}}_{#1}}
\def\seqToBeModulated[#1]{\textit{\textbf{s}}_{#1}}
\def\fourier[#1]{\mathcal{F}\{#1\}}
\def\flipConjugate[#1]{{{\tilde{#1}}}}
\def\expectationOperator[#1]{{\mathbb{E}}[#1]}
\def\operator[#1][#2]{\mathcal{O}_{#1}^{(#2)}}
\def\operatordot[#1][#2]{\bar{\mathcal{O}}_{#1}^{(#2)}}
\def\compositeOperatorF[#1][#2]{{F}_{#1}{(#2)}}
\def\compositeOperatorG[#1][#2]{{G}_{#1}{(#2)}}
\def\compositeOperatorFdot[#1][#2]{\bar{F}_{#1}{(#2)}}
\def\compositeOperatorGdot[#1][#2]{\bar{G}_{#1}{(#2)}}
\def\setOfOperators[#1]{{\mathfrak{J}}_{#1}}
\def\operatorBinary[#1][#2]{O_{#1}^{(#2)}}
\def\operatorSign[#1][#2]{{\rm S}_{#1}^{(#2)}}
\def\operatorScaleA[#1][#2]{{\rm{A}}_{#1}^{(#2)}}
\def\operatorScaleB[#1][#2]{{\rm{B}}_{#1}^{(#2)}}
\def\operatorAngle[#1][#2]{\Omega_{#1}^{(#2)}}
\def\operatorSeparation[#1][#2]{\Delta_{#1}^{(#2)}}
\def\operatorOrderA[#1][#2]{\dot{\rm O}_{#1}^{(#2)}}
\def\operatorOrderB[#1][#2]{\ddot{\rm O}_{#1}^{(#2)}}
\def\operatorAngleScaleA[#1][#2]{\dot{\Omega}_{#1}^{(#2)}}
\def\operatorAngleScaleB[#1][#2]{\ddot{\Omega}_{#1}^{(#2)}}
\def\operatorAngleConjScaleA[#1][#2]{\dot{\Upsilon}_{#1}^{(#2)}}
\def\operatorAngleConjScaleB[#1][#2]{\ddot{\Upsilon}_{#1}^{(#2)}}
\def\functionf[#1]{p^{(#1)}}
\def\functiong[#1]{q^{(#1)}}
\def\functionfdot[#1]{\bar{p}_{\indexIterationANF}^{(#1)}}
\def\functiongdot[#1]{\bar{q}_{\indexIterationANF}^{(#1)}}
\def\funcGfForANF[#1]{f_{#1}}
\def\funcGgForANF[#1]{g_{#1}}
\def\polySeq[#1][#2]{p_{#1}(#2)}
\newcommand\mydots{\hbox to 1em{.\hss.\hss.}}
\tikzset{%
  remember picture with id/.style={%
    remember picture,
    overlay,
    save picture id=#1,
  },
  save picture id/.code={%
    \edef\pgf@temp{#1}%
    \immediate\write\pgfutil@auxout{%
      \noexpand\savepointas{\pgf@temp}{\pgfpictureid}}%
  },
  if picture id/.code args={#1#2#3}{%
    \@ifundefined{save@pt@#1}{%
      \pgfkeysalso{#3}%
    }{
      \pgfkeysalso{#2}%
    }
  }
}
\def\savepointas#1#2{%
  \expandafter\gdef\csname save@pt@#1\endcsname{#2}%
}
\def\tmk@labeldef#1,#2\@nil{%
  \def\tmk@label{#1}%
  \def\tmk@def{#2}%
}
\newcounter{hatchNumber}
\DeclarePairedDelimiter\floor{\lfloor}{\rfloor}
\newif\ifAC@uppercase@first%
\def\Aclp#1{\AC@uppercase@firsttrue\aclp{#1}\AC@uppercase@firstfalse}%
\def\AC@aclp#1{%
	\ifcsname fn@#1@PL\endcsname%
	\ifAC@uppercase@first%
	\expandafter\expandafter\expandafter\MakeUppercase\csname fn@#1@PL\endcsname%
	\else%
	\csname fn@#1@PL\endcsname%
	\fi%
	\else%
	\AC@acl{#1}s%
	\fi%
}%
\def\Acp#1{\AC@uppercase@firsttrue\acp{#1}\AC@uppercase@firstfalse}%
\def\AC@acp#1{%
	\ifcsname fn@#1@PL\endcsname%
	\ifAC@uppercase@first%
	\expandafter\expandafter\expandafter\MakeUppercase\csname fn@#1@PL\endcsname%
	\else%
	\csname fn@#1@PL\endcsname%
	\fi%
	\else%
	\AC@ac{#1}s%
	\fi%
}%
\def\Acfp#1{\AC@uppercase@firsttrue\acfp{#1}\AC@uppercase@firstfalse}%
\def\AC@acfp#1{%
	\ifcsname fn@#1@PL\endcsname%
	\ifAC@uppercase@first%
	\expandafter\expandafter\expandafter\MakeUppercase\csname fn@#1@PL\endcsname%
	\else%
	\csname fn@#1@PL\endcsname%
	\fi%
	\else%
	\AC@acf{#1}s%
	\fi%
}%
\def\Acsp#1{\AC@uppercase@firsttrue\acsp{#1}\AC@uppercase@firstfalse}%
\def\AC@acsp#1{%
	\ifcsname fn@#1@PL\endcsname%
	\ifAC@uppercase@first%
	\expandafter\expandafter\expandafter\MakeUppercase\csname fn@#1@PL\endcsname%
	\else%
	\csname fn@#1@PL\endcsname%
	\fi%
	\else%
	\AC@acs{#1}s%
	\fi%
}%
\edef\AC@uppercase@write{\string\ifAC@uppercase@first\string\expandafter\string\MakeUppercase\string\fi\space}%
\def\AC@acrodef#1[#2]#3{%
	\@bsphack%
	\protected@write\@auxout{}{%
		\string\newacro{#1}[#2]{\AC@uppercase@write #3}%
	}\@esphack%
}%
\def\Acl#1{\AC@uppercase@firsttrue\acl{#1}\AC@uppercase@firstfalse}
\def\Acf#1{\AC@uppercase@firsttrue\acf{#1}\AC@uppercase@firstfalse}
\def\Ac#1{\AC@uppercase@firsttrue\ac{#1}\AC@uppercase@firstfalse}
\def\Acs#1{\AC@uppercase@firsttrue\acs{#1}\AC@uppercase@firstfalse}
\newtheorem{theorem}{Theorem}
\newtheorem{corollary}[theorem]{Corollary}
\newtheorem{example}{\color{black} Example} 
\acrodef{OOB}{out-of-band}
\acrodef{PMEPR}{peak-to-mean envelope power ratio}
\acrodef{SIC}{successive interference cancellation}
\acrodef{PAPR}{peak-to-average-power ratio}
\acrodef{APAC}{aperiodic autocorrelation}
\acrodef{OFDM}{orthogonal frequency division multiplexing}
\acrodef{DFT}{discrete Fourier transform}
\acrodef{DC}{direct current}
\acrodef{CS}{complementary sequence}
\acrodef{GCP}{Golay complementary pair}
\acrodef{ANF}{algebraic normal form}
\acrodef{PSK}{phase-shift keying}
\acrodef{QAM}{quadrature amplitude modulation}
\acrodef{QPSK}{quadrature phase-shift keying}
\acrodef{GDJ}{Golay-Davis-Jedwab}
\acrodef{FFT}{fast Fourier transform}
\acrodef{BER}{bit-error ratio}
\acrodef{SNR}{signal-to-noise ratio}
\acrodef{4G}{Fourth Generation}
\acrodef{5G}{Fifth Generation}
\acrodef{NR}{New Radio}
\acrodef{LTE}{Long-Term Evolution}
\acrodef{PTS}{partial transmit sequences}
\acrodef{PSD}{power spectral density}
\acrodef{LDPC}{low-density parity check}
\acrodef{SE}{spectral efficiency}
\acrodef{eLAA}{enhanced licensed-assisted access}
\acrodef{NR-U}{NR-Unlicensed}
\acrodef{RM}{Reed-Muller}
\acrodef{AE}{autoencoder}
\acrodef{DNN}{deep neural network}
\acrodef{OFDM-AE}{OFDM-based autoencoder}
\acrodef{DL}{deep learning}
\acrodef{CP}{cyclic prefix}
\acrodef{AWGN}{additive white Gaussian noise}
\acrodef{P2C}{polar-to-Cartesian}
\acrodef{CFR}{channel frequency response}
\acrodef{ReLU}{rectified linear unit}
\acrodef{MMSE}{minimum mean sqaure error}
\acrodef{BPSK}{binary phase-shift keying}
\acrodef{BLER}{block error rate}
\acrodef{ML}{maximum-likelihood}
\acrodef{PHY}{physical layer}
\acrodef{PA}{power amplifier}
\acrodef{IDFT}{inverse DFT}
\acrodef{DoF}{degrees-of-freedom}
\acrodef{IoT}{Internet-of-Things}
\acrodef{DFT-s-OFDM}{discrete Fourier transform-spread orthogonal frequency division multiplexing}
\acrodef{MMSE}{minimum mean square error}
\acrodef{FDE}{frequency-domain equalization}
\acrodef{FrFT}{fractional Fourier transform}
\acrodef{TF}{time-frequency}
\acrodef{BFSK}{binary frequency-shift keying}
\acrodef{CSS}{chirp-spread spectrum}
\acrodef{BCSS}{binary chirp spread spectrum}
\acrodef{EVA}{Extended Vehicular A}
\acrodef{MIMO}{multi-input multi-output}
\acrodef{PIC}{parallel interference cancellation}
\acrodef{LoRa}{Long Range}
\acrodef{HF}{high-frequency}
\acrodef{FDSS}{frequency-domain spectral shaping}
\acrodef{OCB}{occupied channel bandwidth}
\acrodef{FSK}{frequency-shift keying}
\acrodef{RF}{radio-frequency}
\acrodef{IM}{index modulation}
\acrodef{DFRC}{dual-function radar and communication}
\def\BibTeX{{\rm B\kern-.05em{\sc i\kern-.025em b}\kern-.08em
    T\kern-.1667em\lower.7ex\hbox{E}\kern-.125emX}}
\begin{document}

\title{
A Wideband Index Modulation  with Circularly-Shifted Chirps
}



\author{
\IEEEauthorblockN{Safi Shams Muhtasimul Hoque\IEEEauthorrefmark{1}, Chao-Yu Chen\IEEEauthorrefmark{2}, Alphan \c{S}ahin\IEEEauthorrefmark{1}}
\IEEEauthorblockA{\IEEEauthorrefmark{1}Electrical  Engineering Department,
University of South Carolina, Columbia, SC, USA}
\IEEEauthorblockA{\IEEEauthorrefmark{2}Department of Engineering Science, National Cheng Kung University, Tainan, Taiwan, R.O.C}
 E-mail: shoque@email.sc.edu, super@mail.ncku.edu.tw,  asahin@mailbox.sc.edu }

\maketitle

\begin{abstract}
In this study, we propose a wideband \ac{IM} based on circularly-shifted chirps. To derive the proposed method, we first prove that a \ac{GCP} can be constructed by linearly combining the Fourier series of chirps. We show that Fresnel integrals and/or Bessel functions, arising from sinusoidal and linear chirps, respectively, can lead to \acp{GCP}. We then exploit \ac{DFT-s-OFDM} to obtain a low-complexity transmitter and receiver. We also discuss its generalization for achieving a trade-off between \ac{PMEPR} and  \ac{SE}. Through comprehensive simulations, we compare the proposed scheme with \ac{DFT-s-OFDM} with \ac{IM}, \ac{OFDM} with \ac{IM} and \acp{CS} from \ac{RM} code. Our numerical results show that the proposed method limits the \ac{PMEPR} while exploiting the frequency selectivity in fading channels without an auxiliary method.
\end{abstract}

\begin{IEEEkeywords}
Chirps, complementary sequences, index modulation, PMEPR
\end{IEEEkeywords}

\acresetall

\section{Introduction}

\Ac{IM} is a general modulation concept which disables or activates a set of resources in different dimensions (e.g., antennas, time slots, subcarriers, codes) while providing a flexible framework for traditional coherent modulation techniques. It has been considered for \ac{5G} networks as it can provide a trade-off between energy efficiency and \ac{SE}  under an \ac{OFDM} framework and spatial techniques \cite{Ishikawa_2016, ChengIM_2018}  (see also the references therein). 

Within the last decade, \ac{OFDM} with \ac{IM} has been studied extensively, particularly for achieving a reliable high data-rate transmission. For example, in \cite{basar_2013}, the subcarriers are partitioned into several subblocks and the specific indices of the active subcarriers in each subblock are used for data transmission. In \cite{Basar2015_CIIM}, it is combined with an interleaver, where the real and imaginary parts of the complex data symbols are transmitted over different active subcarriers of the \ac{OFDM}-\ac{IM} scheme.   In \cite{Wen_2016}, interleaved versus localized grouping is studied. In \cite{dogukan2020supermode}, repetition coding over multiple subcarriers and partitioned constellations are proposed to achieve a diversity gain. Nevertheless, it has been shown that \ac{IM} is more beneficial for a low data-rate communication scenario  as compared to typical \ac{OFDM} transmission through a minimum Euclidean distance analysis \cite{Ishikawa_2016}. Given this result, in this study, we target a reliable low data-rate transmission with \ac{IM} where we can exploit the frequency selectivity naturally through wideband circularly-shifted chirp signals.

A chirp signal can sweep a large frequency spectrum over time. The efficacy of a chirp signal in a radar system was first demonstrated in \cite{darlingtonPatent1949} and later extended to the communications by encoding bits with negative and positive slopes of a linear chirp in the \ac{TF} plane. A low data-rate communication through a proprietary chirp modulation, called \ac{LoRa}, was introduced by Samtec for \ac{IoT} networks. To achieve a higher data rate, a chirp signal is translated in the frequency domain to achieve a set of orthogonal chirps \cite{fresnel_nozh_7523229}. In \cite{sahin_2020}, circularly-shifted chirps are proposed by using the structure of \ac{DFT-s-OFDM} with a specific \ac{FDSS} function. Hence,  a framework is established for chirps by using the typical \ac{OFDM} blocks. 

In this study, we consider circularly-shifted chirps with \ac{IM} and exploit the framework in \cite{sahin_2020}. We provide both theoretical and practical contributions in various areas listed as follows:
\begin{itemize}
    \item {\bf Chirps and \acp{CS}}: We prove that a \ac{CS} \cite{Golay_1961} can be constructed by linearly combining the Fourier series of the chirp signals. Hence, two seemingly different topics are shown to be connected.
    \item {\bf Low-complexity framework:} We show that the \ac{DFT-s-OFDM} can be utilized to achieve a low-complexity wideband \ac{IM} with chirps. We also generalize the proposed scheme to obtain a trade-off between \ac{PMEPR} and \ac{SE}.
    \item {\bf Diversity gain:} We show that the wideband nature of chirps exploits the frequency selectivity without any auxiliary method.
    \item {\bf Comprehensive comparisons:} We compare the proposed scheme with \ac{OFDM}-\ac{IM}, \ac{DFT-s-OFDM}-\ac{IM}, and the \ac{CS} from \ac{RM} codes \cite{davis_1999,Sahin_20twc}, numerically, in terms of \ac{PMEPR} and  error rate.
\end{itemize}

The rest of the paper is organized as follows. In Section \ref{sec:prelim},  preliminary discussions  are provided.  In Section \ref{sec:connection}, the relationship between \acp{CS} and chirps are discussed. By using the results obtained in Section \ref{sec:connection}, a low-complexity modulation scheme is derived. In Section \ref{sec:numresults}, the proposed scheme is assessed numerically. The paper is concluded in Section \ref{sec:conclusion}.

{\em Notation:} The sets of complex numbers and positive integers are denoted by $\complexNumbers$ and $\integersPositiveSet$, respectively. Conjugation is denoted by $(\cdot)^*$. The notation $(\eleGa[{\indexEleOfSeq}])_{i=0}^{\lengthGaGb-1}$ represents the sequence $\seqGa= (\eleGa[0],\eleGa[1],\dots, \eleGa[\lengthGaGb-1])$. The constant $\constantj$ denotes $\sqrt{-1}$.

\section{Preliminaries}
\label{sec:prelim}

An \ac{OFDM} symbol with the symbol duration $\symbolDuration$ can be expressed in continuous time as a polynomial given by
\begin{align}
\polySeq[\seqGa][\polyVariable] \triangleq \eleGa[\numberOfShifts-1]\polyVariable^{\numberOfShifts-1} + \eleGa[\numberOfShifts-2]\polyVariable^{\numberOfShifts-2}+ \dots + \eleGa[0]~,
\label{eq:polySeq}
\end{align}
where $\seqGa = (\eleGa[0],\eleGa[1],\mydots,\eleGa[\numberOfShifts])$ is a sequence of length $\lengthGaGb$, and $\polyVariable\in\{\constante^{\constantj\frac{2\pi\timeVar}{\symbolDuration}}| 0\le\timeVar <\symbolDuration
\}$.

\subsection{Circularly-shifted Chirps}
Let $\basisFunction[{\amountOfShift[\indexTime]}]=\constante^{\constantj\angleSignal[\indexTime]}$ be the $\indexTime$th circular translation of an  arbitrary band-limited function with the duration  $\symbolDuration$, where $\amountOfShift[\indexTime]$ is the amount of circular shift and $m=0,1,2,\mydots, \numberOfShifts-1$. By using Fourier series,  $\basisFunction[{\amountOfShift[\indexTime]}]$ can be approximately expressed as
\begin{align}
\basisFunction[{\amountOfShift[\indexTime]}]  \approx \sum_{\indexSubcarrier=\lowerFrequency}^{\upperFrequency} \fourierSeries[\indexSubcarrier] \constante^{\constantj2\pi\indexSubcarrier\frac{\timeVar-\amountOfShift[\indexTime]}{\symbolDuration}}~,
\label{eq:basisDecompose}
\end{align}
where $\lowerFrequency<0$ and $\upperFrequency>0$ are integers, and $\fourierSeries[\indexSubcarrier]$ is the $\indexSubcarrier$th Fourier coefficient given by
\begin{align}
\fourierSeries[\indexSubcarrier] = \fourier[{\constante^{\constantj\angleSignal[0]}}] \triangleq \frac{1}{\symbolDuration}\int_{\symbolDuration}
\constante^{\constantj\angleSignal[0]}
\constante^{-\constantj2\pi\indexSubcarrier\frac{\timeVar}{\symbolDuration}}d\timeVar. 
\end{align}

Let ${\numberOfOccupiedSubcarriers}/{2\symbolDuration}$ be the maximum frequency deviation around the carrier frequency. The approximation in \eqref{eq:basisDecompose} then becomes more accurate for  $\lowerFrequency<-\numberOfOccupiedSubcarriers/2$ and $\upperFrequency>\numberOfOccupiedSubcarriers/2$. This is due to the fact that $\basisFunction[{\amountOfShift[\indexTime]}]$ is a band-limited function, i.e.,  $|\fourierSeries[\indexSubcarrier]|$  approaches zero rapidly for $|\indexSubcarrier|>{\numberOfOccupiedSubcarriers}/{2}$. Note that the actual bandwidth of a chirp  is slightly larger than twice the maximum frequency deviation \cite{proakisfundamentals}. It can be calculated based on the total integrated power of the transmitted spectrum, i.e., \ac{OCB}. In this study, we express the \ac{OCB} as $\OCB/\symbolDuration$ Hz where $\OCB\in\integersPositiveSet$. Also, the
instantaneous frequency of $\basisFunction[{\amountOfShift[\indexTime]}]$ around the carrier frequency $\fcarrier$ can  be obtained as $\instantaneousFrequency[\indexTime]=\frac{1}{2\pi}d{\angleSignal[\indexTime]}/{d\timeVar}$ Hz. 
 
\subsubsection{Linear Chirps}
\label{subsec:linearChirp}
Let $\instantaneousFrequency[0]$ be a linear function changing from $-\frac{\numberOfOccupiedSubcarriers}{2\symbolDuration}$ Hz to $\frac{\numberOfOccupiedSubcarriers}{2\symbolDuration}$ Hz, i.e., $\instantaneousFrequency[0]=\frac{\numberOfOccupiedSubcarriers}{2\symbolDuration}\left(\frac{2\timeVar}{\symbolDuration}-1\right)$. The $\indexSubcarrier$th Fourier coefficient can be calculated as 
\begin{align}
\fourierSeries[\indexSubcarrier] = \linearCoef(\fresnelC[{\linearXone}] + \fresnelC[{\linearXtwo}] + \constantj\fresnelS[{\linearXone}] + \constantj\fresnelS[{\linearXtwo}])~,
\label{eq:fresnekfcn}
\end{align}
where $\fresnelC[{\cdot}]$ and  $\fresnelS[{\cdot}]$ are the Fresnel integrals with cosine and sine functions, respectively, and $\linearXone=(\numberOfOccupiedSubcarriers/2+2\pi\indexSubcarrier)/\sqrt{\pi\numberOfOccupiedSubcarriers }$, $\linearXtwo=(\numberOfOccupiedSubcarriers/2-2\pi\indexSubcarrier)/\sqrt{\pi\numberOfOccupiedSubcarriers}$, and $\linearCoef= \sqrt{\frac{\pi}{\numberOfOccupiedSubcarriers}}\constante^{-\constantj\frac{(2\pi\indexSubcarrier)^2}{2\numberOfOccupiedSubcarriers}-\constantj\pi \indexSubcarrier}$ \cite{sahin_2020}.

\subsubsection{Sinusoidal Chirps}
\label{subsec:sinosoidalChirp}
Let  $\instantaneousFrequency[0]=\frac{\numberOfOccupiedSubcarriers}{2\symbolDuration}\cos\left({2\pi \frac{\timeVar}{\symbolDuration}}\right)$. In this case,  the maximum frequency deviation is $\numberOfOccupiedSubcarriers/2\symbolDuration$ Hz and it can be shown that 
\begin{align}
    \fourierSeries[\indexSubcarrier] = \besselFunctionFirstKind[\indexSubcarrier][\frac{\numberOfOccupiedSubcarriers}{2}],
    \label{eq:besselfcn}
\end{align}
where $\besselFunctionFirstKind[\indexSubcarrier][\cdot]$ is the Bessel function of the first kind of order $\indexSubcarrier$ \cite{proakisfundamentals}.
 
\subsection{Chirps with DFT-s-OFDM}
\label{subsec:dftsofdmchirp}
Let complex baseband signal  $\transmittedSignal[\timeVar]$ be a linear combinations of the translated chirps as 
\begin{align}
\transmittedSignal[\timeVar] = \sum_{\indexTime=0}^{\numberOfShifts-1} \dataSymbols[\indexTime] \basisFunction[{\amountOfShift[\indexTime]}]
~,
\label{eq:originalWaveform}
\end{align}
where $\dataSymbols[\indexTime]\in\complexNumbers$ is the $\indexTime$th modulation symbol, e.g., a \ac{PSK} symbol. In \cite{sahin_2020}, it was shown that if  $\amountOfShift[\indexTime]=\indexTime/\numberOfShifts\times\symbolDuration$ (i.e., uniform spacing in time), $\transmittedSignal[\timeVar] $ in discrete time can be written  as
\begin{align}
\transmittedSignal[\frac{\indexSample\symbolDuration}{\idftSize}] \approx&\sum_{\indexSubcarrier=\lowerFrequency}^{\upperFrequency}{\fourierSeries[\indexSubcarrier]{\sum_{\indexTime=0}^{\numberOfShifts-1} \dataSymbols[\indexTime] 
			\constante^{-\constantj2\pi \indexSubcarrier \frac{\indexTime}{\numberOfShifts}}}
	\constante^{\constantj2\pi \indexSubcarrier \frac{\indexSample}{\idftSize}}}~,
\label{eq:chirpWave}
\end{align}
by sampling $\transmittedSignal[\timeVar]$ with the period  $\symbolDuration/\idftSize$. Equation \eqref{eq:chirpWave} shows that chirps signal can be synthesized by using the  \ac{DFT-s-OFDM} transmitter with a special choice of \ac{FDSS} sequence which substantially reduces the transmitter complexity. Since a typical \ac{DFT-s-OFDM} receiver with a single-tap \ac{MMSE} \ac{FDE} can be utilized \cite{Sari_1995}, it also offers a low-complexity at the receiver side. Note that the condition $\numberOfShifts\ge\OCB>\numberOfOccupiedSubcarriers$ must hold to be able to represent a chirp by using $\numberOfShifts$ subcarriers. In this study, we also consider that  $\upperFrequency-\lowerFrequency+1=\numberOfShifts$ to avoid \ac{FDSS} beyond $\numberOfShifts$ subcarriers. 

\subsection{Complementary Sequences}
\label{subsec:complementarySequence}
The sequence pair  $(\seqGa,\seqGb)$ of length  $\lengthGaGb$ is a \ac{GCP} if $\apac[\seqGa][\lagForCorrelation]+\apac[\seqGb][\lagForCorrelation] = 0$ for $\lagForCorrelation~\neq0~$ where $\apac[\seqGa][\lagForCorrelation]$ and $\apac[\seqGb][\lagForCorrelation]$ are the \acp{APAC} of the sequences $\seqGa$ and $\seqGb$ at the $\lagForCorrelation$th lag, respectively \cite{Golay_1961}. Each sequence in a \ac{GCP} is called a \ac{CS}. A GCP $(\seqGa,\seqGb)$ can also be defined as any sequence pair satisfying
${|\polySeq[\seqGa][{\constante^{\constantj\frac{2\pi\timeVar}{\symbolDuration}}}]|^2+|\polySeq[\seqGb][{\constante^{\constantj\frac{2\pi\timeVar}{\symbolDuration}}}]|^2} ={\apac[\seqGa][0]+\apac[\seqGb][0]}$ \cite{parker_2003}.

If a \ac{CS} is transmitted with an \ac{OFDM} waveform, the instantaneous peak power of the corresponding signal is bounded, i.e., 
$\max_{\timeVar}|\polySeq[\seqGa][{
	\constante^{\constantj\frac{2\pi\timeVar}{\symbolDuration}}
}]|^2 \le \apac[\seqGa][0]+\apac[\seqGb][0]$. As a result, the \ac{PMEPR} of the \ac{OFDM} symbol $\polySeq[\seqGa][{
\constante^{\constantj\frac{2\pi\timeVar}{\symbolDuration}}}]$, defined as $\max_{\timeVar}|\polySeq[\seqGa][{
	\constante^{\constantj\frac{2\pi\timeVar}{\symbolDuration}}
}]|^2/\averagePower$,  is  less than or equal to $10\log_{10}(2)\approx3$~dB if $\averagePower=\apac[\seqGa][0]=\apac[\seqGb][0]$. Note that for non-unimodular \acp{CS}, $\apac[\seqGa][0]$ may not be equal to $\apac[\seqGb][0]$. In that case, the power of \ac{OFDM} symbol with $\seqGa$ can be different from the one  with $\seqGb$ although the instantaneous peak power is still less than  or equal to $\apac[\seqGa][0]+\apac[\seqGb][0]$ for both symbols. Hence, to avoid misleading results, we define $\averagePower$ as the power of the complex baseband signal in this study.

\section{Chirp-based Complementary Sequences}
\label{sec:connection}
\begin{theorem}
\label{th:golaypair}
Let $\seqx[\timeVar]$ and $\seqy[\timeVar]$ be two complex-valued signals defined by
\begin{align}
\seqx[\timeVar] = \dataSymbols[\chirpm]\constante^{\constantj\chirpphase[\chirpm][\timeVar]} + \dataSymbols[\chirpn]\constante^{\constantj\chirpphase[\chirpn][\timeVar]}~,
\label{eq:Golay1}
\\
\seqy[\timeVar] = \dataSymbols[\chirpm]\constante^{\constantj\chirpphase[\chirpm][\timeVar]} - \dataSymbols[\chirpn]\constante^{\constantj\chirpphase[\chirpn][\timeVar]}~,
\label{eq:Golay2}
\end{align}
where $\dataSymbols[\chirpm],\dataSymbols[\chirpn]\in\complexNumbers$ and $|\dataSymbols[\chirpm]|=|\dataSymbols[\chirpn]|=1$. The Fourier coefficients of $\seqx[\timeVar]$ and $\seqy[\timeVar]$ form a \ac{GCP}.
\end{theorem}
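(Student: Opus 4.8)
The plan is to sidestep the aperiodic autocorrelations at the start and reduce everything to the sum-of-squares characterization of a \ac{GCP} recalled in Section~\ref{subsec:complementarySequence}. The single computation that drives the whole argument is a pointwise amplitude identity. Because $|\dataSymbols[\chirpm]|=|\dataSymbols[\chirpn]|=1$ and $|\constante^{\constantj\chirpphase[\chirpm][\timeVar]}|=|\constante^{\constantj\chirpphase[\chirpn][\timeVar]}|=1$, expanding the two magnitudes gives
\begin{align*}
|\seqx[\timeVar]|^2 &= |\dataSymbols[\chirpm]|^2+|\dataSymbols[\chirpn]|^2 + 2\mathrm{Re}\!\left(\dataSymbols[\chirpm]\dataSymbols[\chirpn]^{*}\constante^{\constantj(\chirpphase[\chirpm][\timeVar]-\chirpphase[\chirpn][\timeVar])}\right), \\
|\seqy[\timeVar]|^2 &= |\dataSymbols[\chirpm]|^2+|\dataSymbols[\chirpn]|^2 - 2\mathrm{Re}\!\left(\dataSymbols[\chirpm]\dataSymbols[\chirpn]^{*}\constante^{\constantj(\chirpphase[\chirpm][\timeVar]-\chirpphase[\chirpn][\timeVar])}\right),
\end{align*}
so the oscillating cross terms cancel and $|\seqx[\timeVar]|^2+|\seqy[\timeVar]|^2=2(|\dataSymbols[\chirpm]|^2+|\dataSymbols[\chirpn]|^2)=4$ for every $\timeVar$. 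This is the only place the unimodularity of $\dataSymbols[\chirpm],\dataSymbols[\chirpn]$ enters.

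Next I would translate this constant-sum statement into the sequence domain. Writing the Fourier series of $\seqx[\timeVar]$ and $\seqy[\timeVar]$ with coefficient sequences $\seqGa=(a_\indexSubcarrier)$ and $\seqGb=(b_\indexSubcarrier)$, the shift property behind \eqref{eq:basisDecompose} yields $a_\indexSubcarrier=\fourierSeries[\indexSubcarrier]\big(\dataSymbols[\chirpm]\constante^{-\constantj2\pi\indexSubcarrier\amountOfShift[\chirpm]/\symbolDuration}+\dataSymbols[\chirpn]\constante^{-\constantj2\pi\indexSubcarrier\amountOfShift[\chirpn]/\symbolDuration}\big)$ and the same with a minus sign for $b_\indexSubcarrier$, though the explicit form is not needed. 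Setting $\polyVariable=\constante^{\constantj2\pi\timeVar/\symbolDuration}$, which ranges over the unit circle, we have $\seqx[\timeVar]=\polySeq[\seqGa][\polyVariable]$ and $\seqy[\timeVar]=\polySeq[\seqGb][\polyVariable]$ up to a common unit-modulus factor $\polyVariable^{\lowerFrequency}$ that turns the Laurent series into the genuine polynomials of \eqref{eq:polySeq}; this factor affects neither $|\seqx[\timeVar]|^2+|\seqy[\timeVar]|^2$ nor the autocorrelations at nonzero lags. The identity from the first step then reads $|\polySeq[\seqGa][\polyVariable]|^2+|\polySeq[\seqGb][\polyVariable]|^2=4=\apac[\seqGa][0]+\apac[\seqGb][0]$ on $|\polyVariable|=1$, which is \emph{exactly} the sum-of-squares definition of a \ac{GCP} from \cite{parker_2003}. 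Equivalently, the frequency-$\lagForCorrelation$ Fourier coefficient of $|\seqx[\timeVar]|^2$ is precisely the aperiodic autocorrelation $\apac[\seqGa][\lagForCorrelation]$ (and likewise $\apac[\seqGb][\lagForCorrelation]$ for $\seqy[\timeVar]$); since a constant function has no nonzero-frequency content, uniqueness of the Fourier series forces $\apac[\seqGa][\lagForCorrelation]+\apac[\seqGb][\lagForCorrelation]=0$ for $\lagForCorrelation\neq0$, the defining \ac{GCP} condition.

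The one point that needs care — and what I regard as the main obstacle — is that a chirp's Fourier series is generally not finitely supported, so $\seqGa$ and $\seqGb$ are a priori doubly infinite rather than the finite sequences implicit in the polynomial definition; for example the sinusoidal chirp gives $\fourierSeries[\indexSubcarrier]=\besselFunctionFirstKind[\indexSubcarrier][\numberOfOccupiedSubcarriers/2]$, which is nonzero for every $\indexSubcarrier$. I would resolve this through the band-limitedness already invoked after \eqref{eq:basisDecompose}: since $|\fourierSeries[\indexSubcarrier]|\to0$ rapidly once $|\indexSubcarrier|>\numberOfOccupiedSubcarriers/2$, the sequences $\seqGa,\seqGb$ are absolutely summable, every autocorrelation sum converges, and the Fourier-uniqueness argument applies verbatim. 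The finite-length \ac{GCP} actually used by the transmitter in the sequel is then obtained by truncating to $\lowerFrequency\le\indexSubcarrier\le\upperFrequency$, at the cost of the controllable approximation error quantified in Section~\ref{sec:prelim}.
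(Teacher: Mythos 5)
Your proposal is correct and takes essentially the same route as the paper: the heart of both arguments is the identical pointwise computation showing $|\seqx[\timeVar]|^2+|\seqy[\timeVar]|^2=2\left(|\dataSymbols[\chirpm]|^2+|\dataSymbols[\chirpn]|^2\right)=4$ for all $\timeVar$, followed by an appeal to the sum-of-squares characterization of a \ac{GCP} from \cite{parker_2003}. The extra material you add --- the Laurent-to-polynomial identification, the Parseval/Fourier-uniqueness step, and the handling of the doubly-infinite, non-finitely-supported coefficient sequences --- simply makes explicit what the paper's proof leaves implicit (and what its example concedes by calling the truncated sequences only approximately a \ac{GCP}), so it is a welcome tightening rather than a different method.
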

\begin{proof}
By the definition, we need to show that $|\seqx[\timeVar]|^2+|\seqy[\timeVar]|^2$ is constant:
\begin{align}
|\seqx[\timeVar]|^2 
=&|\dataSymbols[\chirpm]|^2+|\dataSymbols[\chirpn]|^2\nonumber\\
&+\dataSymbols[\chirpm]\dataSymbols[\chirpn]^*\constante^{\constantj(\chirpphase[\chirpm][\timeVar]-\chirpphase[\chirpn][\timeVar])}+\dataSymbols[\chirpm]^*\dataSymbols[\chirpn]\constante^{-\constantj(\chirpphase[\chirpm][\timeVar]-\chirpphase[\chirpn][\timeVar])}~.\nonumber
\end{align}
Similarly,
\begin{align}
|\seqy[\timeVar]|^2
=&|\dataSymbols[\chirpm]|^2+|\dataSymbols[\chirpn]|^2\nonumber\\
&-\dataSymbols[\chirpm]\dataSymbols[\chirpn]^*\constante^{\constantj(\chirpphase[\chirpm][\timeVar]-\chirpphase[\chirpn][\timeVar])}-\dataSymbols[\chirpm]^*\dataSymbols[\chirpn]\constante^{-\constantj(\chirpphase[\chirpm][\timeVar]-\chirpphase[\chirpn][\timeVar])}~.\nonumber
\end{align}
Therefore, $|\seqx[\timeVar]|^2+|\seqy[\timeVar]|^2
=2\times(|\dataSymbols[\chirpm]|^2+|\dataSymbols[\chirpn]|^2)
=4$,
which implies that $\fourier[{\seqx[{\timeVar}]}]$ and  $\fourier[{\seqy[{\timeVar}]}]$ form a \ac{GCP}.
\end{proof}
Theorem~\ref{th:golaypair} indicates that the Fourier coefficients of a linear combination of the frequency response of two constant-envelope chirps result in a \ac{CS}. As a result, it yields an interesting connection between chirps and \acp{CS}.
\begin{example}
\rm 
Assume that $\seqx[\timeVar]$ and $\seqy[\timeVar]$ are  linear combinations of two circularly-shifted versions of  a band-limited sinusoidal chirp defined in Section~\ref{subsec:sinosoidalChirp}. Therefore, by using \eqref{eq:basisDecompose} and \eqref{eq:besselfcn}, the Fourier coefficients of $\seqx[\timeVar]$ and $\seqy[\timeVar]$ are obtained as
\begin{align}
    \eleGa[\indexSubcarrier] &= \dataSymbols[\chirpm]\besselFunctionFirstKind[\indexSubcarrier][\frac{\numberOfOccupiedSubcarriers}{2}]\constante^{-\constantj2\pi\indexSubcarrier\frac{\amountOfShift[\chirpm]}{\symbolDuration}}+\dataSymbols[\chirpn]\besselFunctionFirstKind[\indexSubcarrier][\frac{\numberOfOccupiedSubcarriers}{2}]\constante^{-\constantj2\pi\indexSubcarrier\frac{\amountOfShift[\chirpn]}{\symbolDuration}}~, 
    \label{eq:CSbesselA}
    \\
    \eleGb[\indexSubcarrier] &= \dataSymbols[\chirpm]\besselFunctionFirstKind[\indexSubcarrier][\frac{\numberOfOccupiedSubcarriers}{2}]\constante^{-\constantj2\pi\indexSubcarrier\frac{\amountOfShift[\chirpm]}{\symbolDuration}}-\dataSymbols[\chirpn]\besselFunctionFirstKind[\indexSubcarrier][\frac{\numberOfOccupiedSubcarriers}{2}]\constante^{-\constantj2\pi\indexSubcarrier\frac{\amountOfShift[\chirpn]}{\symbolDuration}}\label{eq:CSbesselB},
\end{align}
respectively. Based on Theorem~\ref{th:golaypair}, $(\eleGa[{\indexEleOfSeq}])_{i=-\infty}^{\infty}$ and $(\eleGb[{\indexEleOfSeq}])_{i=-\infty}^{\infty}$ form a \ac{GCP}. Since the sinusoidal chirps are band-limited signals, the amplitude of a Fourier coefficient approaches to zero for $|\indexEleOfSeq|\ge\numberOfOccupiedSubcarriers/2$. Therefore, $(\eleGa[{\indexEleOfSeq}])_{i=\lowerFrequency}^{\upperFrequency}$ and $(\eleGb[{\indexEleOfSeq}])_{i=\lowerFrequency}^{\upperFrequency}$ are approximately \ac{GCP}. Note that if the sinusoidal chirps are replaced by the linear chirps, by using \eqref{eq:fresnekfcn},  the Fourier coefficients of $\seqx[\timeVar]$ and $\seqy[\timeVar]$ can be calculated as
\begin{align}
    \eleGa[\indexSubcarrier] =& \dataSymbols[\chirpm]
    \linearCoef(\fresnelC[{\linearXone}] + \fresnelC[{\linearXtwo}] + \constantj\fresnelS[{\linearXone}] + \constantj\fresnelS[{\linearXtwo}])
    \constante^{-\constantj2\pi\indexSubcarrier\frac{\amountOfShift[\chirpm]}{\symbolDuration}}\nonumber\\ &+\dataSymbols[\chirpn]\linearCoef(\fresnelC[{\linearXone}] + \fresnelC[{\linearXtwo}] + \constantj\fresnelS[{\linearXone}] + \constantj\fresnelS[{\linearXtwo}])\constante^{-\constantj2\pi\indexSubcarrier\frac{\amountOfShift[\chirpn]}{\symbolDuration}}~,\nonumber
    \\
    \eleGb[\indexSubcarrier] =& \dataSymbols[\chirpm]
    \linearCoef(\fresnelC[{\linearXone}] + \fresnelC[{\linearXtwo}] + \constantj\fresnelS[{\linearXone}] + \constantj\fresnelS[{\linearXtwo}])
    \constante^{-\constantj2\pi\indexSubcarrier\frac{\amountOfShift[\chirpm]}{\symbolDuration}}\nonumber\\ &-\dataSymbols[\chirpn]\linearCoef(\fresnelC[{\linearXone}] + \fresnelC[{\linearXtwo}] + \constantj\fresnelS[{\linearXone}] + \constantj\fresnelS[{\linearXtwo}])\constante^{-\constantj2\pi\indexSubcarrier\frac{\amountOfShift[\chirpn]}{\symbolDuration}}~.\nonumber
\end{align}
\end{example}

In \figurename~\ref{fig:correlationExample}, we exemplify a \ac{GCP} of length $\numberOfShifts=24$, synthesized through \eqref{eq:CSbesselA} and \eqref{eq:CSbesselB} for $\lowerFrequency=-11$, $\upperFrequency=12$,  ${\amountOfShift[\chirpm]}/{\symbolDuration}=0/24$, ${\amountOfShift[\chirpn]}/{\symbolDuration}=1/24$, and $\dataSymbols[\chirpm]=\dataSymbols[\chirpn]=1$. When $\numberOfOccupiedSubcarriers=24$, the sequences are truncated heavily. Therefore, it does not satisfy the definition of a \ac{GCP} given in Section~\ref{subsec:complementarySequence}. On the other hand, when the maximum frequency deviation is halved, $\OCB$ is $15$ for containing $99\%$  of the total integrated power of the  spectrum. Hence, $\numberOfShifts=24$ forms the chirps well and the resulting sequences form a \ac{GCP}. It is also worth noting that synthesized \acp{CS} are not unimodular sequences. Therefore, the mean power of \ac{OFDM} symbol changes although  instantaneous power is bounded.

\begin{figure}[t]
	\centering
	{\includegraphics[width =3.3in]{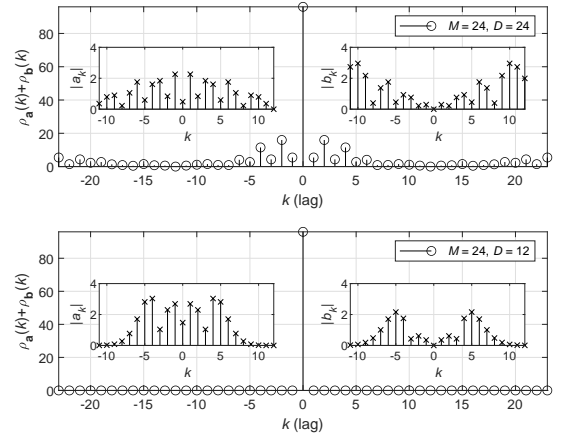}
	} 
	\vspace{-4mm}
	\caption{An instance of \ac{GCP} of length $\numberOfShifts=24$ synthesized through sinusoidal chirps. 
	}
	\label{fig:correlationExample}
\end{figure}

\begin{corollary} 
\label{co:indexmodulation}
 Let the coefficients $\dataSymbols[\chirpm],\dataSymbols[\chirpn]\in\setPSKsymbols[{\psksize}]\triangleq\{\constante^{\constantj 2\pi\times \frac{\indexpsksymbol}{\psksize }}| \indexpsksymbol=0,1,\mydots, \psksize-1\}$. Let $\setChirps\triangleq\{\basisFunction[{\amountOfShift[\indexTime]}]|\indexTime=0,1,\mydots,\numberOfShifts-1\}$ be a set of $\numberOfShifts$ uniformly circularly-shifted chirps of an  arbitrary band-limited function with the duration $\symbolDuration$ and $\OCB\le\numberOfShifts$. Without using the same chirp twice, the total number of distinct \acp{CS} of length  $\numberOfShifts$ is ${\numberOfShifts \choose 2}\times \psksize^2$.
\end{corollary}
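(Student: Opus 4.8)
The plan is to set up an explicit bijection between the \acp{CS} of length $\numberOfShifts$ furnished by Theorem~\ref{th:golaypair} and the configurations consisting of an \emph{unordered} pair of distinct chirps drawn from $\setChirps$ together with one \ac{PSK} symbol assigned to each of the two chosen chirps, and then to count the configurations. The count of configurations is immediate; the substance is showing that the correspondence is one-to-one, i.e. that distinct configurations never produce the same \ac{CS}.

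First I would fix notation. Under the uniform spacing $\amountOfShift[\indexTime]=\indexTime/\numberOfShifts\times\symbolDuration$, the translation factor in \eqref{eq:basisDecompose} reduces to $\constante^{-\constantj2\pi\indexSubcarrier\amountOfShift[\indexTime]/\symbolDuration}=\constante^{-\constantj2\pi\indexSubcarrier\indexTime/\numberOfShifts}$, so that a \ac{CS} built from the chirps indexed by $\chirpm\neq\chirpn$ with symbols $\dataSymbols[\chirpm],\dataSymbols[\chirpn]\in\setPSKsymbols[{\psksize}]$ has entries
\begin{align}
\eleGa[\indexSubcarrier]=\fourierSeries[\indexSubcarrier]\bigl(\dataSymbols[\chirpm]\constante^{-\constantj2\pi\indexSubcarrier\chirpm/\numberOfShifts}+\dataSymbols[\chirpn]\constante^{-\constantj2\pi\indexSubcarrier\chirpn/\numberOfShifts}\bigr),\quad \indexSubcarrier=\lowerFrequency,\dots,\upperFrequency,
\label{eq:csparamcount}
\end{align}
exactly as in \eqref{eq:CSbesselA}. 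Since the bracketed sum is symmetric in its two terms, only the unordered choice of chirps is relevant, and the stipulation ``without using the same chirp twice'' is precisely the requirement $\chirpm\neq\chirpn$. Counting the configurations is then immediate: there are ${\numberOfShifts\choose 2}$ unordered pairs of distinct chirps and $\psksize$ independent choices for each of the two \ac{PSK} symbols, for a total of ${\numberOfShifts\choose 2}\times\psksize^2$ configurations.

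It remains to show the map from configurations to \acp{CS} defined by \eqref{eq:csparamcount} is injective. The paper fixes $\upperFrequency-\lowerFrequency+1=\numberOfShifts$, so $\indexSubcarrier$ runs over $\numberOfShifts$ consecutive integers, and, the chirp being band-limited, $\fourierSeries[\indexSubcarrier]\neq0$ over the occupied band. Dividing \eqref{eq:csparamcount} by $\fourierSeries[\indexSubcarrier]$ therefore yields the length-$\numberOfShifts$ sequence with entries $\dataSymbols[\chirpm]\constante^{-\constantj2\pi\indexSubcarrier\chirpm/\numberOfShifts}+\dataSymbols[\chirpn]\constante^{-\constantj2\pi\indexSubcarrier\chirpn/\numberOfShifts}$, which over these $\numberOfShifts$ consecutive lags is the image of the $\numberOfShifts$-length coefficient vector equal to $\dataSymbols[\chirpm]$ in position $\chirpm$, $\dataSymbols[\chirpn]$ in position $\chirpn$, and zero elsewhere, under the matrix $[\constante^{-\constantj2\pi\indexSubcarrier\indexTime/\numberOfShifts}]$. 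Writing $\indexSubcarrier=\lowerFrequency+p$ factors this matrix as a per-column phase scaling times the standard $\numberOfShifts$-point \ac{DFT} matrix, so it is invertible; applying its inverse recovers the support $\{\chirpm,\chirpn\}$ (the two chosen chirps) and the nonzero values $\dataSymbols[\chirpm],\dataSymbols[\chirpn]$ (the two \ac{PSK} symbols). Hence the configuration is uniquely determined by $\seqGa$, the map is injective, and the count is exactly ${\numberOfShifts\choose 2}\times\psksize^2$.

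The hard part is this injectivity, which rests on two ingredients: the invertibility of the character matrix over $\numberOfShifts$ consecutive lags (where the paper's constraint $\upperFrequency-\lowerFrequency+1=\numberOfShifts$ is what makes the transform square and full-rank) and the non-vanishing of $\fourierSeries[\indexSubcarrier]$ that legitimizes dividing out the \ac{FDSS} coefficients. I expect the only delicate point to be the latter: were some $\fourierSeries[\indexSubcarrier]$ to vanish in-band, that lag would carry no information and one would need a short sparsity argument (two active coefficients cannot be confused using the remaining $\numberOfShifts-1$ full-rank rows). For the sinusoidal and linear chirps considered here the coefficients are generically nonzero over the occupied band, so this edge case does not arise.
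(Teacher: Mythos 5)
Your proposal is correct, and its counting core---${\numberOfShifts \choose 2}$ unordered chirp pairs times $\psksize^2$ symbol assignments, with the \ac{CS} property supplied by Theorem~\ref{th:golaypair}---is exactly the paper's. Where you genuinely differ is the distinctness step. The paper dispatches it in one sentence (``the \acp{CS} are distinct as $\basisFunction[{\amountOfShift[\indexTime]}]$ are distinct''), which taken literally is not a proof: pairwise distinctness of the chirps does not preclude two different two-term combinations producing the same coefficient sequence; what is needed is precisely the injectivity you establish. Your argument supplies that rigor and makes explicit the ingredients the paper leaves implicit: the uniform spacing turns circular shifts into \ac{DFT} phase ramps, the standing constraint $\upperFrequency-\lowerFrequency+1=\numberOfShifts$ makes the character matrix square, and inverting it after dividing out the \ac{FDSS} recovers the support $\{\chirpm,\chirpn\}$ and the two symbols. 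Two caveats. First, your in-line claim that band-limitation forces $\fourierSeries[\indexSubcarrier]\neq 0$ in-band is a non sequitur---band-limitation controls only the out-of-band decay; for instance $\besselFunctionFirstKind[\indexSubcarrier][{\numberOfOccupiedSubcarriers}/{2}]$ vanishes whenever $\numberOfOccupiedSubcarriers/2$ hits a zero of the Bessel function---but you flag this yourself, and your fallback is sound: the null space of the \ac{DFT} matrix with one row deleted is spanned by a single character vector all of whose $\numberOfShifts$ entries are nonzero, so no difference of two $2$-sparse vectors (at most $4$-sparse) can lie in it when $\numberOfShifts>4$. Second, the paper's proof also records why the synthesized \acp{CS} have length $\numberOfShifts$ (the bound $\OCB\le\numberOfShifts$ together with Nyquist's sampling theorem); in your write-up this is absorbed silently into the assumption $\upperFrequency-\lowerFrequency+1=\numberOfShifts$, which plays the same role but deserves the one-line justification the paper gives.
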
 

\begin{proof}
Since $|\setPSKsymbols[{\psksize}]|=\psksize$, there exist $\psksize^2$ combinations for $\{\dataSymbols[\chirpm],\dataSymbols[\chirpn]\}$.
As $|\setChirps|=\numberOfShifts$, $\constante^{\constantj\chirpphase[\chirpm][\timeVar]}$ and $\constante^{\constantj\chirpphase[\chirpn][\timeVar]}$ in \eqref{eq:Golay1} can be chosen in ${\numberOfShifts \choose 2}$ ways without using the same chirp.
Thus, the total number of \acp{CS} is ${\numberOfShifts \choose 2}\times \psksize^2$ via Theorem~\ref{th:golaypair}. The \acp{CS} are distinct as $\basisFunction[{\amountOfShift[\indexTime]}]$ are distinct. Since the \ac{OCB} of $\basisFunction[{\amountOfShift[\indexTime]}]$ is less than or equal to $\numberOfShifts/\symbolDuration$, the length of the synthesized \ac{CS} is $\numberOfShifts$ based on Nyquist's sampling theorem.
\end{proof}

\section{A Low-Complexity Scheme with DFT-s-OFDM}
\label{sec:scheme}
In this section, we utilize Corollary~\ref{co:indexmodulation} to develop a wideband \ac{IM} by using the structure of \ac{DFT-s-OFDM} discussed in Section~\ref{subsec:dftsofdmchirp}. At the transmitter, we consider $\numberOfInfoBits=\floor{\log_{2}({{\numberOfShifts \choose 2}\times \psksize^2})}$ information bits. Assuming $\psksize$ is an integer power-of-two, we map $2\log_{2}(\psksize)$ of the information bits to $\symbolPSK[1]$ and  $\symbolPSK[2]$, where  $\symbolPSK[1],\symbolPSK[2]\in \setPSKsymbols[{\psksize}]$.
The rest of the information bits are utilized to choose a set $\{\chirpm,\chirpn\}$, where $\chirpm,\chirpn\in\{0,1,\mydots,\numberOfShifts-1\}$ and $\chirpm\neq\chirpn$. We generate the modulation symbols as $\dataSymbols[\chirpm]=\symbolPSK[1]$, $\dataSymbols[\chirpn]=\symbolPSK[2]$, and $\dataSymbols[i|i\in\{0,1,\mydots,\numberOfShifts-1\},i\neq\chirpm,\chirpn]=0$.  We then synthesize the transmitted signal according to \eqref{eq:chirpWave}. 

\begin{figure*}
	\centering
	{\includegraphics[width =7in]{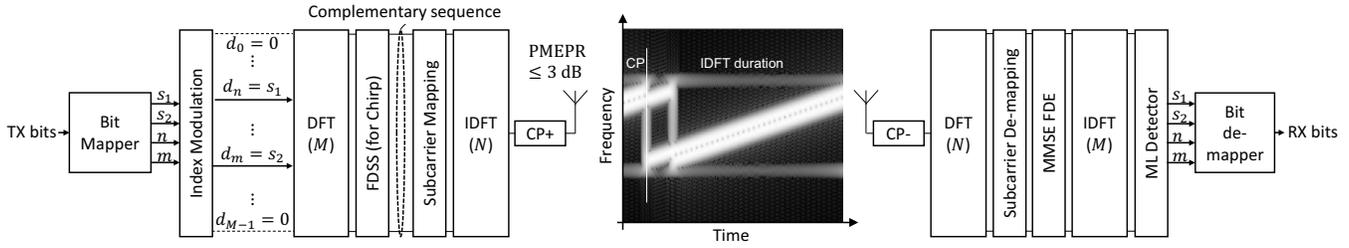}
	} 
	\caption{Transmitter and receiver block diagrams for the proposed scheme.} 
	\label{fig:txrxa}
\end{figure*}
In \figurename~\ref{fig:txrxa},  we give the transmitter and receiver block diagrams. First, $\numberOfShifts$-point \ac{DFT} of the modulation symbols $(\dataSymbols[0],\dataSymbols[1],\mydots,\dataSymbols[\numberOfShifts-1])$ is calculated. The resulting sequence is then shaped in the frequency domain with an \ac{FDSS} for generating chirps. According to Corollary~\ref{co:indexmodulation}, after \ac{FDSS}, a \ac{CS} is formed since there are only two modulation symbols for the proposed scheme. We then map the resulting \ac{CS} to the \ac{OFDM} subcarriers. After an $\idftSize$-point \ac{IDFT} of the mapped \ac{CS}, the resulting signal is transmitted with a \ac{CP}. 

At the receiver side, we consider a single-tap \ac{MMSE}-\ac{FDE} and a \ac{ML} detector for detecting $\chirpm$,  $\chirpn$, $\symbolPSK[1]$, and  $\symbolPSK[2]$.
Let $(\dataSymbolAfterIDFTspread[0],\dataSymbolAfterIDFTspread[1],\mydots,\dataSymbolAfterIDFTspread[\numberOfShifts-1])$ be the received modulation symbols after $\numberOfShifts$-point \ac{IDFT} operation. Since $|\symbolPSK[1]|=|\symbolPSK[2]|=1$, the \ac{ML} detector for $\chirpm$, $\chirpn$, $\symbolPSK[1]$, and  $\symbolPSK[2]$ can be expressed as
\begin{align}
    \{ \{\chirpmdetect, \chirpndetect\}, \symbolPSKdetect[1], \symbolPSKdetect[2]\} = \arg\max_{\substack{\{\chirpm, \chirpn\}, \symbolPSK[1], \symbolPSK[2]\\ \chirpm\neq\chirpn}} \Re\{  \dataSymbolAfterIDFTspread[\chirpm]\symbolPSK[1]^*+\dataSymbolAfterIDFTspread[\chirpn]\symbolPSK[2]^*\}~.
    \label{eq:mldet}
\end{align}
As $\chirpm\neq\chirpn$, a low-complexity \ac{ML} detector can be implemented by evaluating  $\test[i,\indexpsksymbol]=\Re\{ \dataSymbolAfterIDFTspread[i]\constante^{-\constantj2\pi\indexpsksymbol/\psksize}\}$ for $i=0,1,\mydots,\numberOfShifts-1$  and $\indexpsksymbol=0,1,\mydots,\psksize-1$ and choosing two different indices for $i$ and the corresponding $\indexpsksymbol$'s that maximize $\test[i,\indexpsksymbol]$. 

\subsection{Generalizations and Practical Issues}
For a practical radio, the mapping from the information bits to a combination of $\{\chirpm,\chirpn\}$ at the transmitter (and vice versa for the receiver) may be a challenge. This can be addressed by constructing a bijection function from integers to the set of combinations via a combinatorial number system of degree $2$ \cite{basar_2013}, also called combinadics. 

The choice of \ac{FDSS} is important for obtaining a good \ac{BER} performance and a low \ac{PMEPR}. For example, a linear chirp offers a more flat \ac{FDSS} as compared to the one with a sinusoidal chirp. In \cite{sahin_2020}, it was demonstrated that a flatter \ac{FDSS} improves the \ac{BER} performance for the receiver with a single-tap \ac{MMSE}-\ac{FDE}. On the other hand, a linear chirp causes abrupt instantaneous frequency changes within the \ac{IDFT} duration. Therefore, it requires a much lower $\numberOfOccupiedSubcarriers$ as compared to the one for a sinusoidal chirp for a given $\numberOfShifts$. This issue can distort the targeted \acp{CS} and cause a larger \ac{PMEPR} than the theoretical bound, as demonstrated in Section~\ref{sec:numresults}.

If more than two indices are allowed to be utilized, the proposed scheme can be generalized to a scheme that offers a trade-off between maximum \ac{PMEPR} and \ac{SE}.  Let $\numberofIndices$ denote the number of indices that are allowed to be used. The \ac{SE} of the scheme can be calculated as
$
    \spectralEfficiency = \frac{\floor{\log_{2}({{\numberOfShifts \choose \numberofIndices}\times \psksize^\numberofIndices})}}{\numberOfShifts}~\text{bit/second/Hz}~,
$
while the \ac{PMEPR} of a signal is always less than or equal to $\numberofIndices$ as $\numberofIndices$ chirps are transmitted simultaneously. In \figurename~\ref{fig:tradeoff}, we illustrate the trade-off between the maximum \ac{PMEPR} and the \ac{SE}  for a given $\numberOfShifts$ and $\psksize=4$ by changing $\numberofIndices$ from $1$ to $11$. The generalized scheme reduces to the scheme  in \figurename~\ref{fig:txrxa} with \acp{CS} for $\numberofIndices=2$. If there is room for more  transmit power, $\numberofIndices>2$ can be utilized for increasing the \ac{SE} while still limiting \ac{PMEPR}.

\begin{figure}[t]
	\centering
	{\includegraphics[width =3.3in]{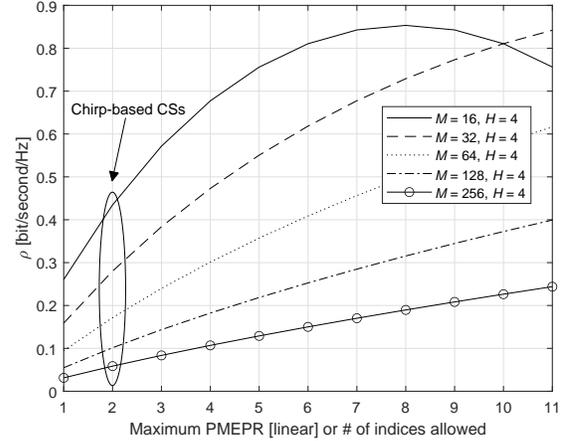}
	} 
	\vspace{-2mm}
	\caption{Trade-off between \ac{SE} and maximum \ac{PMEPR}.}
	\label{fig:tradeoff}
\end{figure}

\subsection{Comparisons}
As compared to \ac{DFT-s-OFDM}-\ac{IM}, the proposed scheme has a \ac{PMEPR} advantage since  the signal is spread in the time domain whereas \ac{DFT-s-OFDM} generates sinc pulses. The proposed scheme and \ac{OFDM} with \ac{IM} have similar \ac{PMEPR} characteristics since \ac{OFDM} with \ac{IM} also spreads the symbol energy in time. However, the energy is also distributed  within the signal bandwidth with the proposed scheme. Thus, the proposed scheme allows a coherent receiver to exploit frequency diversity in selective channels naturally. On the other hand, \ac{OFDM}-\ac{IM} receiver does not fully benefit from the frequency selectivity without an extra operation, e.g., repetitions or interleaving, at the transmitter.

In \cite{davis_1999}, a low \ac{PMEPR} coding scheme was proposed to generate \acp{CS} from \ac{RM} code. This scheme synthesizes $\numberOfPointsForPSK^{\positiveInt+1}\cdot \positiveInt!/2$ \acp{CS} where the length of each \ac{CS}  has to be in the form of $2^\positiveInt$, where $\positiveInt\in\integersPositiveSet$. 
When a seed \ac{GCP} of length $N$ is utilized with this scheme,  it can be shown that $\numberOfPointsForPSK^{\positiveInt+1}\cdot \positiveInt!$ \acp{CS} of length $N\cdot 2^\positiveInt$ can be generated \cite{Sahin_20twc}. Therefore, the spectral efficiency of the schemes in  \cite{davis_1999} and \cite{Sahin_20twc} can be calculated as $\floor{\log_2 ({ \numberOfPointsForPSK^{\positiveInt+1}\cdot \positiveInt!})}/2^{\positiveInt+1}$ and $\floor{\log_2 ({ \numberOfPointsForPSK^{\positiveInt+1}\cdot \positiveInt!})}/(N2^{\positiveInt})$ bit/second/Hz, respectively. 

The differences between these schemes and the proposed scheme can be listed as follows: 1) The proposed scheme allows the length of \ac{CS} to be chosen arbitrarily. For example, $\numberOfShifts$ can be an integer chosen as an integer multiple of 12 based on the resource allocation in 3GPP \ac{5G} \ac{NR} and \ac{4G} LTE or the resource units in IEEE 802.11ax, e.g., 26 subcarriers. 2) The schemes in \cite{davis_1999, Sahin_20twc} do not offer a trade-off between \ac{PMEPR} and spectral efficiency whereas  $\numberofIndices$ can be chosen for a higher \ac{SE} at a cost of high \ac{PMEPR} with our scheme. The \ac{PMEPR} is still theoretically limited. 3) The number of \acp{CS} is a function of a second order coset term generated through permutations  in \cite{davis_1999} and \cite{Sahin_20twc}. However, designing a decoder for all possible permutations is not trivial \cite{Paterson2000decode}. For a fixed coset, the decoder can be implemented through fast Hadamard transformation or recursive methods \cite{Schmidt2005}, but the \ac{SE} reduces to $\floor{\log_2 { \numberOfPointsForPSK^{m+1}}}$. Under this case, the \ac{SE} of the proposed scheme and the schemes in \cite{davis_1999} and \cite{Sahin_20twc} are similar while a simple decoder can be employed for the proposed method.

Note that  the \ac{SE} of the proposed scheme can be considered as low as compared to typical coding schemes such as \ac{LDPC} or polar codes. Although this appears as a disadvantage, there exist communication scenarios (e.g., uplink control channels in \ac{5G} \ac{NR} \cite{Sahin_20twc}, \ac{IoT} networks, and \ac{DFRC} scenarios) where the primary concern is reliability under low \ac{SNR} or a longer battery life, rather than a high data rate. For these scenarios, the proposed scheme provides a way of limiting \ac{PMEPR} without an optimization procedure at the transmitter while exploiting frequency selectivity.

\section{Numerical Results}
\label{sec:numresults}
For computer simulations, the symbol duration and the \ac{CP} duration are set to $\symbolDuration = 16.67~\mu$s and $\CPDuration=2.34~\mu$s, respectively, based on the  \ac{5G} \ac{NR} waveform parameters.
We assume that the transmitter uses 32 resource blocks, i.e.,  $\numberOfShifts=384$ subcarriers are used. The data symbols were generated based on \ac{QPSK}, i.e., $\psksize=4$.  For the fading channel, we consider ITU \ac{EVA}. At the transmitter, the combinations $\{\chirpm,\chirpn\}$ are generated by mapping natural order of integers to combinations and we choose $\numberOfOccupiedSubcarriers=300$. At the receiver, all possible $\{\chirpm,\chirpn\}$ combinations are considered  and a non-valid combination (e.g., due to the noise) is mapped to a valid combination to avoid catastrophic results. The \ac{FDSS} is chosen based on \eqref{eq:besselfcn} for sinusoidal chirps and \eqref{eq:fresnekfcn}  for linear chirps. We assume that the channel and \ac{FDSS} information are available at the receiver. We compare the proposed scheme with \ac{OFDM}-\ac{IM}, \ac{DFT-s-OFDM}-\ac{IM} (i.e., no \ac{FDSS} is applied),  and the \acp{CS} from \ac{RM} codes \cite{davis_1999,Sahin_20twc}. For \ac{OFDM}-\ac{IM}, an \ac{ML} detector which incorporates the channel frequency response is utilized \cite{basar_2013}.
For \acp{CS} from \ac{RM} codes, we use a seed \ac{GCP} of length $N=3$ and use $\positiveInt=7$. To achieve the same spectral efficiency, we consider $16$ permutations for $4$ extra bits transmission and   the \ac{ML} decoder  proposed in \cite{Schmidt2005} (channel frequency response is also incorporated) is employed  for each permutation.

\begin{figure}[t]
	\centering
	{\includegraphics[width =3.3in]{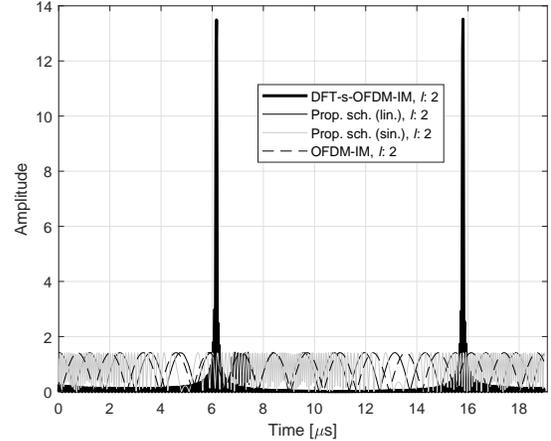}
	} 
	\vspace{-2mm}
	\caption{An instance of temporal behavior of the signals.}
	\label{fig:temp}
	\vspace{-3mm}
\end{figure}
\begin{figure}[t]
	\centering
	{\includegraphics[width =3.3in]{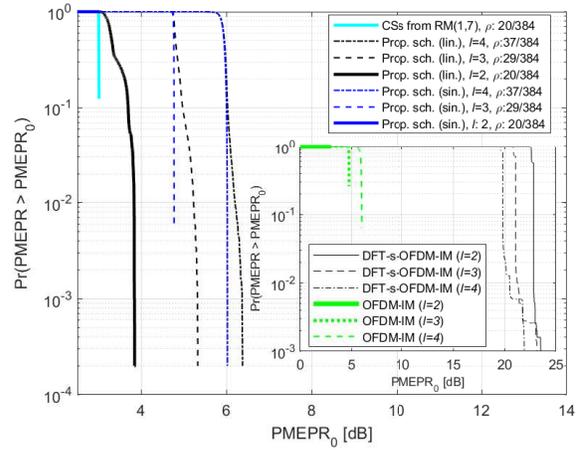}
	} 
	\vspace{-2mm}
	\caption{PMEPR distribution.}
	\label{fig:pmepr}
	\vspace{-3mm}
\end{figure}
\begin{figure}[t]
	\centering
	{\includegraphics[width =3.3in]{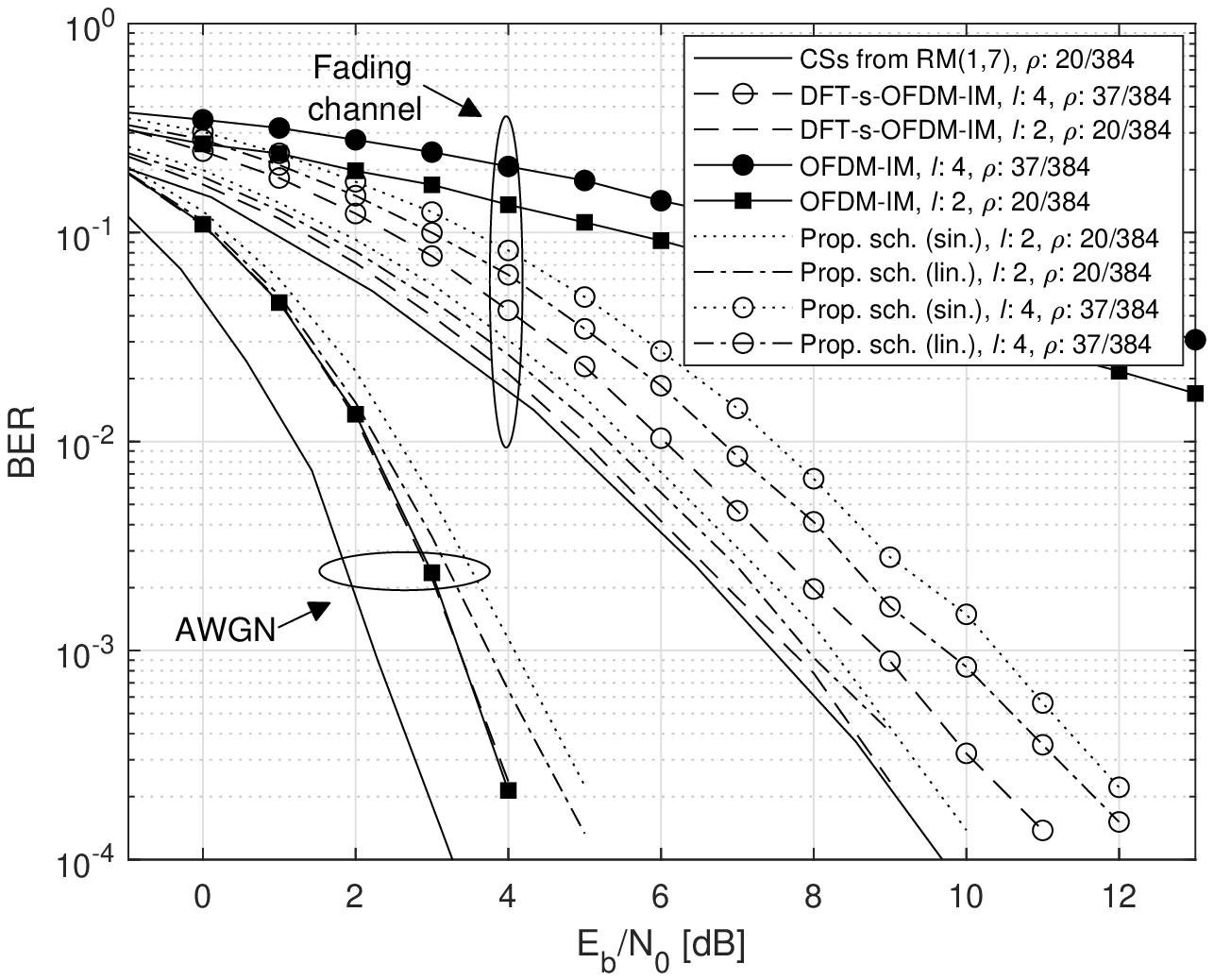}
	} 
	\vspace{-2mm}
	\caption{\ac{BER} versus $\EbNO$  in \ac{AWGN} and fading channels.}
	\label{fig:ber}
	\vspace{-3mm}
\end{figure}
\begin{figure}[t]
	\centering
	{\includegraphics[width =3.3in]{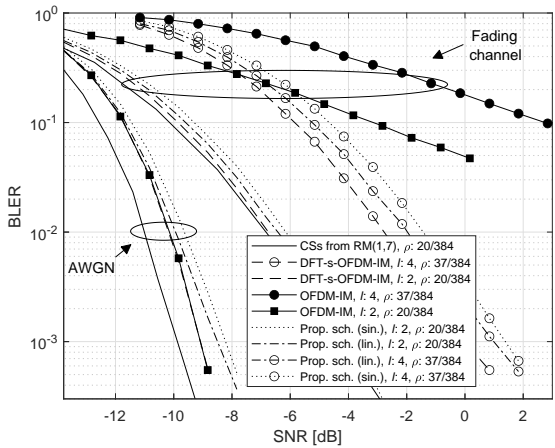}
	} 
	\vspace{-2mm}
	\caption{\ac{BLER} versus \ac{SNR} in \ac{AWGN} and fading channels.}
	\label{fig:bler}
	\vspace{-3mm}
\end{figure}

In  \figurename~\ref{fig:temp}, we plot the time-domain signals obtained with \ac{DFT-s-OFDM}-\ac{IM} and the proposed scheme for $\numberofIndices=2$. The \ac{DFT-s-OFDM}-\ac{IM} shows very high \ac{PMEPR} as compared to the one generated with the proposed scheme as the indices are represented as Dirichlet-sinc pulses. With sinusoidal chirps, the instantaneous signal power never exceeds $2$ (i.e., $\sqrt{2}$ as amplitude). This can be understood either by  the properties of a \ac{CS} or the summation of two circularly-shifted chirps. Similar behavior is also observed for linear chirps.

In \figurename~\ref{fig:pmepr}, the \ac{PMEPR} distributions are provided for different schemes. The signal is over-sampled to measure \ac{PMEPR} accurately. The \ac{PMEPR} is always less than or equal to $3$ dB for \acp{CS} from \ac{RM} codes and sinusoidal chirps for $\numberofIndices=2$. However, the distortion on linear chirps due to the truncation is higher than the one for sinusoidal chirps  for $\numberOfOccupiedSubcarriers=300$. Therefore, the \ac{CS} is not accurately formed with linear chirps under our simulation settings and the maximum \ac{PMEPR} is slightly higher than $3$~dB. 
For $\numberofIndices=3$ and $\numberofIndices=4$, \acp{PMEPR} are still limited for the proposed scheme while offering a higher \ac{SE} as compared to \acp{CS} from \ac{RM} codes. \ac{OFDM}-\ac{IM} results in \ac{PMEPR} distributions are similar to the ones for the proposed scheme. On the other hand, the \ac{DFT-s-OFDM}-\ac{IM} causes signals with very high \acp{PMEPR}. 

In \figurename~\ref{fig:ber} and \figurename~\ref{fig:bler}, we compare the error-rate performance in \ac{AWGN} and fading channels for $\numberofIndices=\{2,4\}$. 
Since the proposed scheme utilizes \ac{IM} and the \ac{DFT} is an orthogonal transformation, it inherits the structural properties of orthogonal \ac{FSK} with coherent detection at the receiver \cite{proakisfundamentals}.   For the proposed scheme, the receiver equalizes the signal in the frequency domain even in the \ac{AWGN} channel because of \ac{FDSS}. Therefore, a flatter response improves the \ac{BER} result in both \ac{AWGN} and fading channels as demonstrated in \figurename~\ref{fig:ber}. Since \ac{FDSS} for linear chirp is flatter than that of sinusoidal chirp \cite{sahin_2020}, the proposed scheme with linear chirps yields better a \ac{BER} performance. In \ac{AWGN}, the proposed schemes with different \ac{FDSS} operate in the range of $3.5$~dB - $4.5$~dB for $\EbNO$ at 1e-3 \ac{BER}.  The \acp{CS} from \ac{RM} code is superior to all other schemes for $\numberofIndices=2$ and provides $1.5$~dB gain. However, this gain diminishes in the fading channel and the schemes result in similar performances. The maximum difference is around $0.5$~dB at  $\EbNO=8$~dB approximately for $\numberofIndices=2$. Note that the \ac{ML} detector for \acp{CS} from \ac{RM} code runs the recursive algorithm for 16 times for different permutations while the proposed scheme only relies on a single $\numberOfShifts$-\ac{FFT}. For $\numberofIndices=4$, the proposed scheme offers flexibility to transmit a larger number of bits at the cost of $3$~dB \ac{PMEPR} increment. Increasing the number of transmitted bits from $20$ to $37$ causes approximately $4$~dB \ac{SNR} loss as shown in \figurename~\ref{fig:bler} as the received energy per bit is approximately halved. Although the \ac{DFT-s-OFDM}-\ac{IM} performs $1$~dB better than the proposed schemes for $\numberofIndices=4$, a larger power back-off is required for plain \ac{DFT-s-OFDM} (see \ac{PMEPR} distributions in \figurename~\ref{fig:pmepr}), which substantially offset this gain.  The performance of \ac{OFDM}-\ac{IM} is worse than the proposed scheme since it does not fully exploit the frequency selectivity. The slopes of the \ac{BER} and \ac{BLER} curves for the proposed scheme under  the fading channel are also noticeably higher than the ones for \ac{OFDM}-\ac{IM} as the transmitted signals are wideband.

\section{Concluding Remarks}

In this study, we prove that \acp{CS} and chirps are related to each other. We exemplify that Bessel functions and Fresnel integrals can be useful for generating \acp{GCP}. By utilizing this relationship, we then develop a low-complexity low-\ac{PMEPR} wide-band \ac{IM} with chirps. We discuss a generalization of the proposed scheme, which yields a trade-off between \ac{SE} and \ac{PMEPR}. 
We compare the proposed scheme with \ac{OFDM}-\ac{IM}, \ac{DFT-s-OFDM}-\ac{IM}, and the \acp{CS} from \ac{RM} codes. The proposed scheme offers significantly better \ac{PMEPR} performance than \ac{DFT-s-OFDM}-\ac{IM} while exploiting frequency selectivity without any other additional method. Since it does not utilize a coset term needed for the \ac{CS} from \ac{RM}, a low-complexity decoder can also be utilized at the receiver for the proposed scheme.



\bibliographystyle{IEEEtran}
\label{sec:conclusion}

\bibliography{references}

\end{document}